\newtheorem{theorem}{Theorem}
\newtheorem{corollary}{Corollary}
\title{A Random Dot Product Model for Weighted Networks}
\author{Daryl R. DeFord$^{1}$ \and\!\!\!\!\!\!\! Daniel N. Rockmore$^{1,2,3}$
\\
\normalsize{$^{1}$Department of Mathematics, Dartmouth College, Hanover, NH, USA 03755}\\
\normalsize{$^{2}$Department of  Computer Science Dartmouth College, Hanover, NH, USA 03755}\\
\normalsize{$^{3}$The Santa Fe Institute, Santa Fe, NM, USA 87501}}
\begin{document}
\maketitle

\begin{abstract}
This paper presents a generalization of the random dot product model for networks whose edge weights are drawn from  a parametrized probability distribution. We focus on the case of integer weight edges and show that many previously studied models can be recovered as special cases of this generalization. Our model also determines a dimension--reducing embedding process that gives geometric interpretations of community structure and centrality. The dimension of the embedding has consequences for the derived community structure and we exhibit a stress function for determining  appropriate dimensions. We use this approach to analyze a coauthorship network and voting data from the U.S. Senate. 

\end{abstract}

\noindent Keywords: Random Dot Product Graph, Weighted Networks, Generative Models

\section{Introduction}
\subsection{Background}

Since the introduction of random graph models by Erdos and Renyi \cite{er1}, and Gilbert \cite{G1} in 1959, generative graph models have become a fundamental tool for understanding the statistical properties of complex networks \cite{G2,P0}. While the so-called ``E-R networks" provide a null model that match (by design) the basic statistic of average degree, it preserves no other important structural properties, thereby making it something of a strawman for comparison. The same is true even for the   more sophisticated   configuration model (preserving degree sequence). Other generative models, such as the Barabasi--Albert preferential  attachment model \cite{ba1} and the Watts--Strogatz small world model \cite{ws1}, have been introduced to more closely approximate  the behavior of observed networks.

The {\em stochastic block mode}l (SBM) is a  generative network process that has of late attracted a great deal of attention \cite {A1,G2} for its   ability to incorporate community structure.   It starts with the assumption of community structure encoded in terms of intra- and inter-community network probabilities \cite{H1,K0}. See Section 3.8 of \cite{G2} and the introduction to \cite{C_1} for a good review as well as applications. It is of particular interest in the social sciences where it is commonly used as a framework for community detection \cite{F0,K0,P0}.  Statistical approaches to the  determination of model parameters for individual networks is an active area of study \cite{A0,A1,C_M1}. 

The work cited above generally relates to simple unweighted networks. Constructing generative models for weighted networks to complement these is necessary to fully understand intrinsically weighted systems \cite{C_1}. Simply transforming the network of interest into an unweighted network is often not useful and at worst is confounding. For example simple thresholding can and will generally obscure properties of the network (see e.g.,  \cite{T1}).

Standard null models in this area may permute the edge weights (see e.g., \cite{P1}) or randomly reassign according to some distributional considerations (see e.g.,  \cite{B3,L1}). Generalizing tools that have been developed for simple networks to weighted or multilayer networks is an active and important area of study \cite{D1,K,N0}.  In \cite{C_1,C} a weighted stochastic block model is presented, along with examples demonstrating community structures that are obscured by thresholding. They also use a weighted approach to analyze several real--world weighted networks, including airport flights (weighted by number of passengers) \cite{C0}, international scientific collaboration (weighted by number of papers that include authors from both nations) \cite{P}, and congressional voting (weighted by a normalized ``interlock'' measure) \cite{PP}, among others. 

In this paper, we introduce the {\em Weighted Random Dot Product Model} (WRPDM), a new kind of generative process for weighted networks based on  a generalization of the {\em Random Dot Product Model} (RDPM).  The RDPM generalizes many versions of the SBM \cite{O} and has proved useful for proving statistical results about SBM community detection \cite{C_M1,C_M2}.  We focus our attention on networks with integer edges weights, but the methodology presented   is applicable to more general weighting systems.  We also apply this new formulation to several examples.  

\subsection{Random Dot Product Models} The RDPM is a latent space model introduced by Kraetzel et al. \cite{K2,N3} and further developed by Young and Scheinerman \cite{S_Y2,S_Y1}. The RDPM is a generalization of dot product representation graphs \cite{F1}, which are generalizations of interval graphs and intersection graphs, of interest for decades as combinatorial objects  \cite{H2,K1}. In the RDPM process, a dimension $d$ is chosen and each node in the network is associated  to a vector in $\mathbb{R}^d$, drawn from a fixed distribution over $\mathbb{R}^d$. From this, edges are determined according to Bernoulli trials with the probability of an edge occurring between any two nodes  given by the (suitably constrained) dot product of the two associated vectors. In the original papers, analytical results  on the expected properties of networks drawn from the RDPM were derived for the cases of $d=1$ and $d=2$ \cite{K2,N3}.  Results for larger $d$ were studied in \cite{S_Y2,S_Y1}. Values of $d$ up to 24 were used in numerical experiments in \cite{O} while  \cite{M0} compared small values of $d$ for a data set of international relations.

The RDPM process also motivates an adjacency spectral embedding for networks drawn from a SBM. For a given network, the adjacency embedding \cite{C_M3,C_M1}, assigns the corresponding row vector from a normalized spectral decomposition of the adjacency matrix to each node, approximating the RDPM for the network (e.g. see Lemma 2.5 of \cite{C_M3}).  The linear algebraic properties of this embedding allow for asymptotic statistical results to be proved about the community assignment formulation of the SBM \cite{C_M3,C_M2}. This includes a use for proving statistical consistency results about community assignment, including a hypothesis testing procedure over the distribution of original latent positions \cite{C_M3,C_M1,C_M2}.   The hypothesis test detects whether two observed graphs were drawn from the same underlying RDPM or SBM process \cite{C_M2}.

More formally, following the standard definitions as  in \cite{S_Y2}, we describe the RDPM generative model as a sequence of steps:
\begin{enumerate}
\item[]{\bf (RDPM 1)}: Select the number of nodes $n$.
\item[]{\bf (RDPM 2)}:  Select the latent dimension $d$.
\item[]{\bf (RDPM 3)}:  Select a distribution $W$ over $\mathbb{R}^d$ with $\mathbb{P}\left(\langle X, Y\rangle\in (0,1)\right)=1$ where $X$ and $Y$ are drawn independently from $W$.
\item[]{\bf (RDPM 4)}:  For each node, $1\leq j\leq n$, draw a vector, $X_j\in\mathbb{R}^d$ from $W$.
\item[]{\bf (RDPM 5)}:  Form an adjacency matrix, $A$, form a network with  $A_{j,\ell}$ drawn from $\operatorname{Bernoulli}(\langle X_j, X_\ell\rangle)$ for $j\neq \ell$ and $A_{j,j}=0$ for all $1\leq j\leq n$. 
\end{enumerate}

{\em \bf Notation.} Throughout this paper we will use this notation consistently with $n$ representing the number of nodes in a given network, $d$ being the latent dimension, and nodes indexed by $j$ and $\ell$. When we consider probability distributions with multiple parameters we will use $k$ to describe the number of parameters and index the parameters with $i$. 

Clearly, the choice of distribution $W$ greatly influences the expected statistics of networks generated with the RDPM. However, it is still possible to prove bounds on various expected network statistics, even for arbitrary $W$ \cite{S_Y2,S_Y1}. In particular, these papers show  that for a broad class of distributions $W$, networks generated with the RDPM exhibit clustering and short average path lengths as would be expected in small world networks. In some cases it can also be shown that the expected degree distribution follows a power law \cite{S_Y2,S_Y1}. In Section 2  we will show how restrictions of the RDPM can describe other commonly studied generative models including the Erdos--Renyi model, stochastic block models, and the Chung--Lu model.

 Another key feature of the RDPM is the interpretability of the vectors assigned to each node. Like many latent space models, two nodes whose vectors  are close together in $\mathbb{R}^d$ are more likely to be connected in the corresponding network. However,  the use of vectors and a dot product instead of simply points and a distance measure allows us to understand the propensity to connect along two different axes: ``similarity'' (as captured by the angle between the vectors) and ``significance'' (as captured by the magnitudes of the involved vectors) \cite{S_T,S_Y2}. Geometric interpretations of similarity are common to many dimension reduction and latent space models. In the RDPM, these similarities occur because the closer two vectors are to pointing in the same direction, the larger their dot product will be. However, magnitude of the vector also contributes to the size of the dot product ($\langle u,v\rangle=||u||\cdot||v|| \cos(u,v)$), so for a fixed direction, an increase in magnitude would contribute to an increase in its dot product with any other vector, thus increasing its propensity to form edges (as per the RDPM). This is what is meant by significance.  As we will discuss in Section 4.2 and Section 4.3 these two factors can be thought of as representing the group membership (angle--similarity) and centrality (magnitude--significance) of the associated network nodes. 

Of particular interest is the ``inverse'' problem first considered in \cite{S_T}: given a network  with adjacency matrix $A$, determine a set of RDPM parameters such that the respective dot products reproduce the adjacency matrix of the observed network with high probability.   In \cite{S_T} Scheinerman and Tucker give an iterative algorithm for determining a set of $n$ separate $1\times d$ vectors $\{X_i\}$, one for each node, so that $(X^TX)_{j,\ell}=\langle x_j, x_\ell\rangle\approx A_{j,\ell}$ for $j\neq \ell$. This procedure also produces a dimension reduction technique by choosing $d<<n$ and representing the nodes by the $\{X_j\}$. As in the generative version, the advantage of this approach for dimension reduction is that $\{X_j\}$ have a natural interpretation in many contexts (discussed above). In Section 4 we will also discuss a null model, based on this dimension reduction technique. 

 A variant of the Schneirman and Tucker algorithm is used in \cite{M0} to learn vectors associated to  an international alliance network. The same approach is used in \cite{T0} to construct an efficient algorithm for inexact graph matching. While the method of \cite{S_T} determines the $\{X_i\}$ by approximating the entries of a given adjacency matrix with a positive semi--definite matrix, the problem of estimating block assignments for SBM derived networks with a RDPM is discussed with a MLE formulation in \cite{O}. Their statistically motivated algorithm offers an asymptotically exact estimation procedure for determining the $\{X_i\}$ corresponding to networks originally drawn from an SBM using a modified RDPM where the connection strengths are a logistic function of the dot products of the associated vectors.

\subsection{Related Work} Our work directly extends the original RDPM formulation \cite{K2,N3}. In that, the edge probabilities are determined by a continuous, $[0,1]$-valued function of the dot product of the associated vectors. That is, step (RDPN 5) computes the probability of edge existence as $\operatorname{Bernoulli}(f(\langle X_j, X_\ell\rangle))$, where $f$ is some continuous function from $\mathbb{R}\rightarrow[0,1]$.  Most applications of this model use the identity function, $f(x)=x$,  \cite{S_T,S_Y2,S_Y1}, although some recent work considers  a logistic function, $f(x)=\frac{1}{1+e^{-x}}$ for more effective community detection \cite{O}. Our generalization differs from \cite{O} in that we replace the Bernoulli distribution with other probability distributions, instead of modifying the function of the dot product. 

There is also a body of work in the creation of generative models for weighted networks. Simple instances of this include the {\em Gaussian ensemble}, wherein nodes are represented by feature vectors of a fixed length with entries drawn from a normal distribution and edges are weighted according to the pairwise correlations \cite{L1}. Null models in the spirit of the configuration model for simple networks can be constructed for particular weighted networks by permuting the edge weights. For more subtle models, the important case of integer weights has been given some attention:  both \cite{R} and \cite{S} give  generative models that use the Poisson distribution to construct an integer weighted network. These can also be used for multinetworks, viewed as integer weighted networks (see e.g. Section 4.1 of \cite{K}). Generative algorithms for {\em multilayer} networks (``stacks" of simple networks that produce a multinetwork) are surveyed in Section 4.3 of \cite{K}.  One common approach is to generate the individual layers independently \cite{DD,SL,K} in which case the distribution of the individual edge weights in the aggregate are sums of independent random variables. The models of \cite{R} and \cite{S} are distinguished from the multilayer approach, as they do not arise as finite sum of independent Bernoulli layers.  In a similar fashion, the Poisson version of our model does not arise as a aggregate of (Bernoulli) RDPM networks. We will show in the next section that the Poisson model presented in \cite{R} is a one--dimensional, restricted version of our approach. Our approach differs from \cite{S} as we do not fix the number of edges that will occur in the network.

Another generative process for weighted networks comes from a weighted version of the SBM. The case of   edge weights  drawn from a Poisson distribution, instead of as binary variables, was used in \cite{B0,K0} to simplify the derivation and construction of the {\em degree-corrected SBM} for simple networks. The degree--corrected SBM adds an additional parameter to each node, reflecting the propensity of that node to form ties. This addresses the problem that many complex networks of interest tend to have hubs and inference based on the standard SBM tends to cluster the nodes by degree, placing all of the hubs together even if they represent separate communities \cite{Z1}. Replacing binary edge weights with Poisson edge weights means that for large values of $n$ and small values of $p$, the Poisson distribution approximates the binomial distribution but is more analytically tractable \cite{GS}. These Poisson versions were applied  to empirical networks \cite{P0,Z1} and found to successfully represent the real world data, as complex networks tend to be large (high $n$) and sparse (low $p$). However, these Poisson SBM approaches are designed to describe unweighted networks not to actually model networks with Poisson valued edge weights.   Aicher et al. have introduced a weighted version of the stochastic block model, using weights drawn from any exponential distribution, for the purposes of community detection \cite{C_1,C}. 

\subsection{Contributions} This paper introduces the Weighted RDPM (WRDPM),  a generalization of the RDPM for weighted networks. Our model shares some characteristics with Aicher et al.'s weighted SBM approach \cite{C_1,C}. However, the WRDPM is more general in that it does not assume an underlying block structure. Additionally, as a latent space model, our model can be studied using linear algebraic and geometric tools.  From a generative perspective, any weighted SBM with positive definite parameter matrices can be realized as a special case of our framework.  The connections between the RDPM with Poisson weights and the stochastic block model have not appeared in the literature previously. 
After describing the formal generative process, we show how  several other generative models arise as special cases of this model. This allows us to define natural generalizations of these models to weighted networks that have not previously appeared in the literature.

Our model provides a principled framework for constructing adjacency embeddings of weighted networks as has been used in \cite{C_M3,S_T,C_M1,C_M2} for simple networks. This embedding is our reason to prefer a latent space model that uses the dot product to parametrize edge weight, as it relates the embedding to a matrix factorization problem as described in Section 4.1. This process allows us to construct geometric interpretations of community structure and node centrality.  We also present the first principled approach to dimension selection for a dot product model by exhibiting a stress function for dimension selection that prioritizes community detection as described in Section 4.4. 

\subsection{Outline} The rest of this paper is organized as follows: Section 2 contains the formal definition of our model and presents an example of the generative process. Section 3 discusses some natural simplifications of our method and relations to other models. The inferential aspect of this model is discussed theoretically in Section 4 and demonstrated with applications in Section 5. 

\section{Weighted Random Dot Product Model}\label{sec:WRDPM}

\subsection{Generative Process} We build on  the RDPM  to produce a  generative model for weighted networks, where the edge weights are drawn from some parametrized probability distribution $P$. The choice of $P$  necessarily depends upon the application under consideration. For non-negative integer data, a Poisson or negative binomial distribution may be most reasonable, whereas continuous data such as correlation coefficients of time series may require a Gaussian or uniform distribution. Our goal is to present a model that is flexible enough to represent all of these varied situations. 

The RDPM constructs a network whose edges are selected according to a Bernoulli distribution parametrized by a function of the dot product of the respective vectors associated to node. This corresponds to a distribution of integer edge weights restricted to the set $\{0,1\}$.  In order extend this framework to arbitrary weighted networks we replace the Bernoulli distribution -- whose mass is concentrated on  $\{0,1\}$ --
with any parametrized distribution $P$ concentrated on the nonnegative reals. Following the RDPM methodology, we associate to each node, a {\em family} of vectors, one for each parameter of $P$. 
Then, we draw the edge weight between each pair of nodes from the distribution parametrized by the collection of respective dot products.
The Weighted Random Dot Product Model (WRDPM) model is thus defined as follows\footnote{We have recently learned that this is effectively the same definition as that proposed in unpublished work of R. Tang \cite{PriebeCommunication}.}:

\begin{enumerate} 
\item[]{\bf(WRDPM 0):} Select a parametrized probability distribution\\ $P(p_1,p_2,\ldots,p_k)$ for the edge weights.  Let  $S_i\subseteq\mathbb{R}$ the domain for $p_i$. 
\item[]{\bf(WRDPM 1):}  Select the number of desired nodes $n$. 
\item[]{\bf(WRDPM 2):}  For each parameter $p_i$,   select a dimension $d_i$. 
\item[]{\bf(WRDPM 3):} For each parameter $p_i$, select a distribution $W_i$ defined over $\mathbb{R}^{d_i}$ so that $\mathbb{P}(\langle X_i, Y_i\rangle\in S_i)=1$ where $X_i$ and $Y_i$ are drawn independently from $W_i$. 
\item[]{\bf(WRDPM 4):}  For each node, $1\leq j\leq n$, select $k$ vectors $1\leq i\leq k$ (one from each parameter space), $X_i^j\in\mathbb{R}^{d_i}$, according to distribution $W_i$. 
\item[]{\bf(WRDPM 5):}  Finally, construct a weighted adjacency matrix, $A$,  for the network, with $A_{j,\ell}$ drawn according to  $P(\langle X^\ell_1,X^j_1\rangle,\langle X^\ell_2,X^j_2\rangle,\ldots,\langle X^\ell_k,X^j_k\rangle)$ for $j>\ell$, $A_{j,\ell}=A_{\ell,j}$  for $j>\ell$ and $A_{j,j}=0$ for all $1\leq j\leq n$.
\end{enumerate}

This process gives rise to an undirected weighted network with no self--loops. The not necessarily symmetric weighted case  can be addressed through an analogous generalization of the  the directed RDPM networks presented in \cite{S_Y1}. We will mostly focus on the case where $P$ is   a distribution over the natural numbers, usually the Poisson distribution.

\subsection{Assortative WRDPM Examples} 

In order to demonstrate the WRDPM process and motivate our later discussion of community structure in WRDPM networks we present two examples that generate Poisson--weighted networks with assortative community structure. Although the setup for these examples is slightly  complicated it is representative of the edge parametrized models discussed in Section 3, the community WRDPM models  of Section 4, and the applications explored in Section 5. 

\subsubsection{Simple Communities}

  \begin{figure}
  \subfloat[Community 1 Vectors]{\includegraphics[height=1.5in]{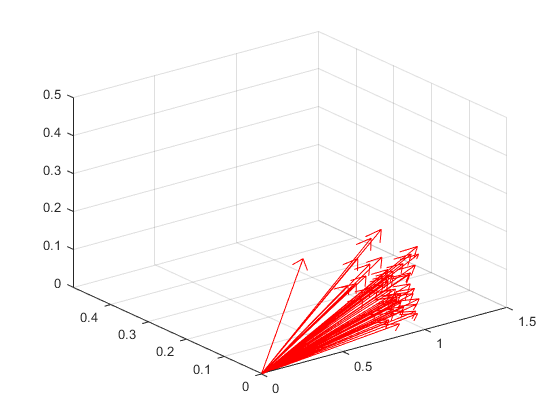}}\ 
    \subfloat[Community 2 Vectors]{\includegraphics[height=1.5in]{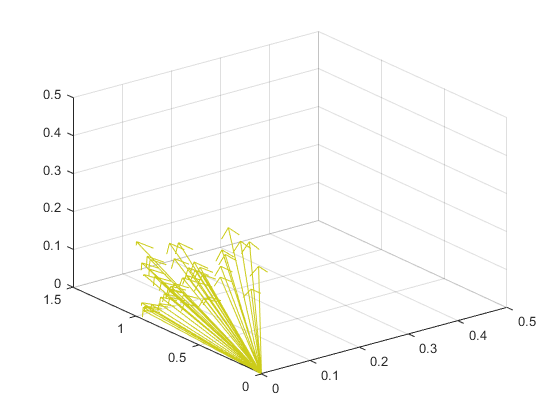}}
      \subfloat[Community 3 Vectors]{\includegraphics[height=1.5in]{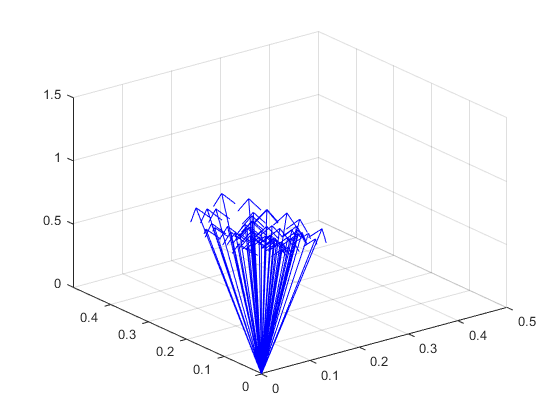}}\\
        \subfloat[All Vectors]{\includegraphics[height=2in]{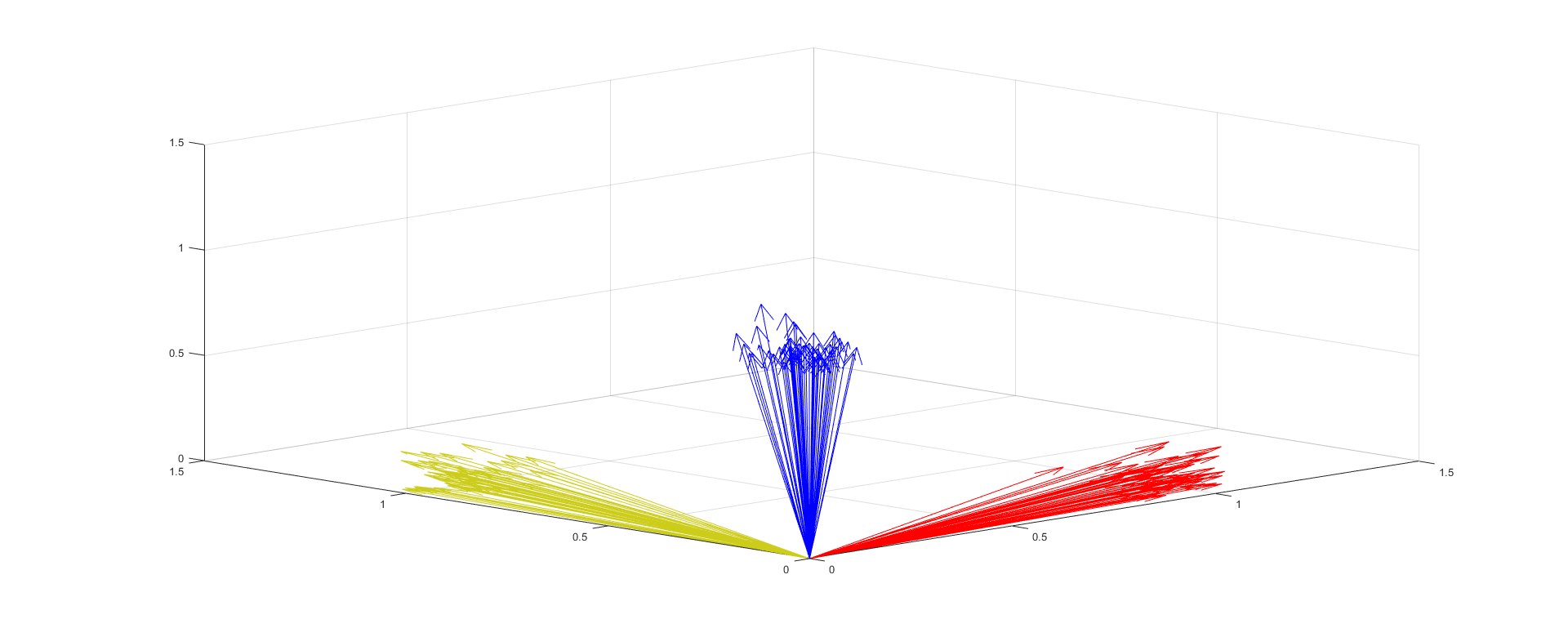}}
\caption{Vectors drawn for a three--community, assortative WRDPM. Subfigures (a), (b), and (c) contain the vectors associated to the individual communities, while (d) shows the entire collection of vectors. The assortative structure of the resulting network (Figure 2) can be determined from (d) as the intra--community dot products are much larger than the inter--community dot products.}
  \end{figure}
  
  \begin{figure}
   \subfloat[Dot Products]{\includegraphics[height=2in]{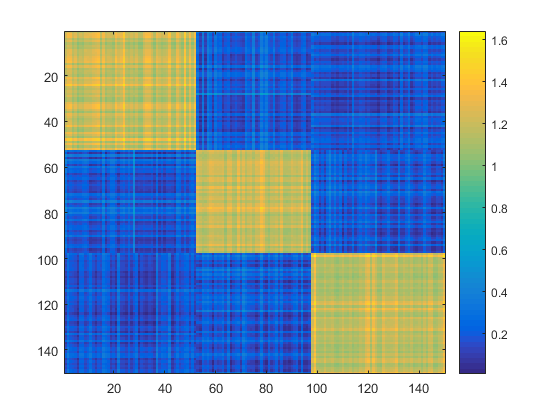}}\qquad 
    \subfloat[WRDPM Network]{\includegraphics[height=2in]{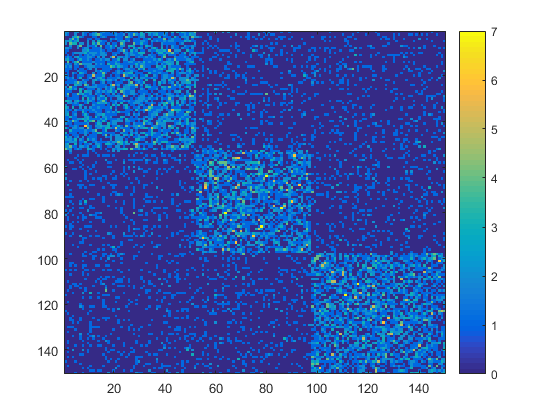}}
    \caption{Once the vectors are drawn from $W_\lambda$ (Figure 1) we compute the pairwise dot products (a) to determine the Poisson parameter for each edge. Then, the weight of each edge is drawn as a Poisson random variable using the dot product parameter forming an assortative weighted network (d). }
  \end{figure}
 
The adjacency matrix of a network with assortative community structure can be approximated as a block matrix where the diagonal blocks are denser than the off--diagonal blocks. In order to encode this property with the WRDPM  we want the matrix of dot products in {\bf (WRDPM 5)}to have this block structure.  A natural way to approach this it to begin by selecting a family of $k$ orthogonal vectors, one for each desired community. Then we form a distribution for each community by  add a small amount of variation to each vector in the directions of the other communities vectors.  The distribution $W$ is then taken to be a  distribution over these families of ``nearly orthogonal'' vectors, for example by choosing the community assignment for each node uniformly over the $k$ vector families. 

As an explicit three--community example we construct a WRDPM using this approach. We begin by selecting $P$ to be the Poisson distribution for the edge weights {\bf (WRDPM 0)} and take $n=150$ {\bf (WRDPM 1)}. Since the Poisson distribution has only one parameter, $\lambda\in\mathbb{R}^+$ we only need to select one dimension,  $d_\lambda$. As this is a three--community example we take $d_\lambda=3$ {\bf (WRDPM 2)}. Motivated by the discussion in the preceding paragraph we define the following distribution over $\mathbb{R}^3$ {\bf (WRDPM 3)}.
Let $\{e_1,e_2,e_3\}$ be the standard basis of $\mathbb{R}^3$.  and $y$ be a random variable with a half--normal distribution centered at $0$ with variance $.1$. Then, we define $W_\lambda$ by uniformly  selecting one of the basis vectors and adding noise with draws from $Y$:

\begin{equation}
W_\lambda=\begin{cases} 
e_1+Ye_1+Ye_2+Ye_3&\frac13\\
e_2+Ye_1+Ye_2+Ye_3&\frac13\\
e_3+Ye_1+Ye_2+Ye_3&\frac13\\
\end{cases}
\end{equation}

  For each node we select a vector from $W_\lambda$ {\bf (WRDPM 4)}. Figure 1(a-c) shows the vectors drawn in each of the individual communities, while Figure 1(d) shows all of the vectors together. In Section 5 we will see that vectors learned from U.S. Senate voting data exhibit similar behavior to the vectors drawn from $W_\lambda$ as shown in Figure 1(d). To complete the WRDPM we compute the pairwise dot products of the vectors drawn in the previous step {\bf (WRDPM 5)}. The dot products are displayed in Figure 2(a). The displayed values are the parameters used for actually constructing networks in this example. The heat map of a sample network parametrized by these values {\bf (WRDPM 5} is shown in Figure 2(b).

 \subsubsection{Multiresolution WRDPM}

In many complex networks the structure within communities is quite different than the structure connecting communities. The multiresolution model presented in \cite{Fos1} addresses this issue by modeling intra--community connections with a latent space model and the inter--community connections with a SBM. We can construct a similar model by modifying the distribution in {\bf (WRDPM 3)} to encode a weight parameter for each vector, representing the propensity of the associated node to form links. The distribution of the weights then controls the intra--community distribution of edges.
 
In this example we take the weight parameter to be exponentially distributed. More formally, let $X$ be an exponential random variable with parameter $\lambda=2$. Then we take our vector distribution for {\bf (WRDPM 3)} as follows:
 
\begin{equation}
V_\lambda=\begin{cases} 
Xe_1+Ye_2+Ye_3&\frac13\\
Xe_2+Ye_1+Ye_3&\frac13\\
Xe_3+Ye_1+Ye_2&\frac13\\
\end{cases}
\end{equation}
 
This version of the WRDPM also generalizes the model presented in \cite{R}, as the connection behavior within each community is mainly governed by a single multiplicative parameter, the exponential weight given by draws of $X$. In order to highlight this behavior the nodes in each community are sorted by their intra--community weight for the plots in Figure 4.  

To construct the WRDPM we proceed as in the previous example, selecting $P$ to be Poisson {\bf (WRDPM 0)}, $n=150$ {\bf (WRDPM 1)}, and $d_\lambda=3$ {\bf (WRDPM 2)}. As in the previous example we display the individual community vectors Figure 3(a--c) as well as the entire collection of vectors Figure 3(d). The dot product and weighted network plots in Figure 4 display the logarithms of the values to account for the exponential scaling. From these plots, the intra--community structure is clear unlike the random, noisy behavior observed in Figure 2(d). This type of structure will appear again when we analyze the coauthorship network in Section 5. 

   \begin{figure}
  \subfloat[Community 1 Vectors]{\includegraphics[height=1.5in]{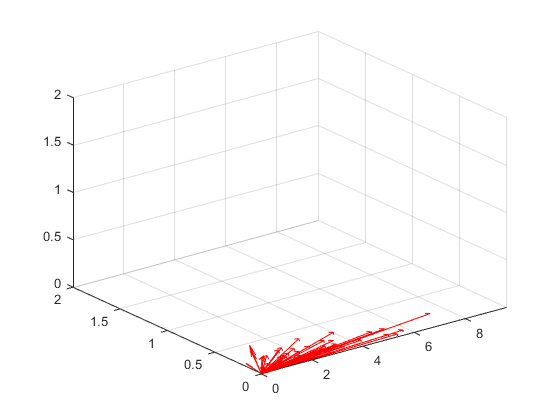}}\ 
    \subfloat[Community 2 Vectors]{\includegraphics[height=1.5in]{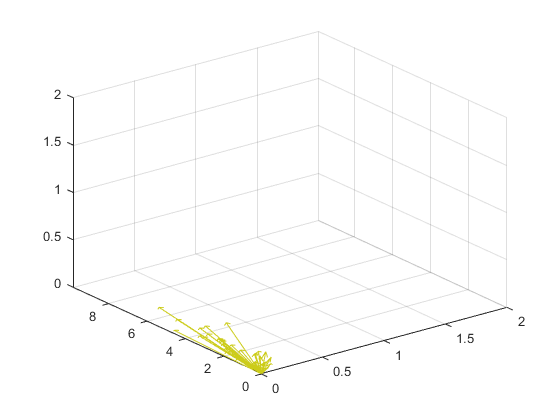}}
      \subfloat[Community 3 Vectors]{\includegraphics[height=1.5in]{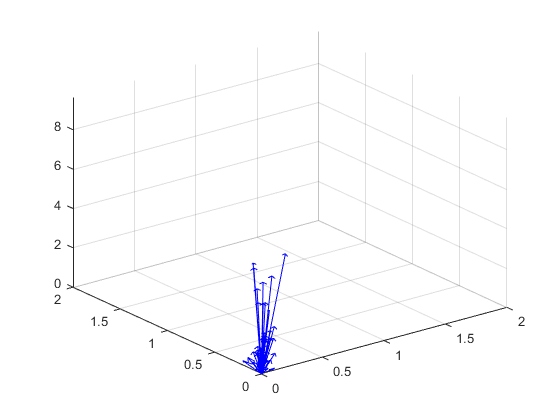}}\\
      
      \begin{center}
        \subfloat[All Vectors]{\includegraphics[height=3in]{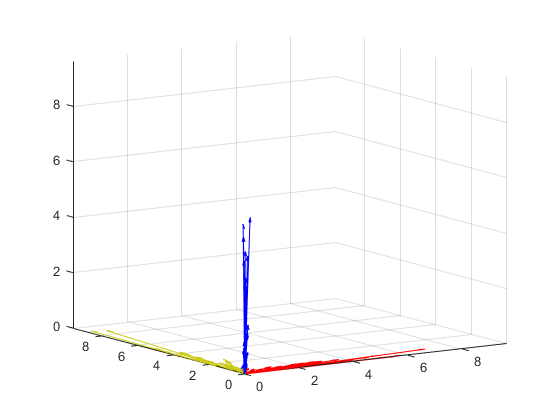}} \end{center}
\caption{Vectors drawn for a three--community, assortative WRDPM with multiresolution structure. Subfigures (a), (b), and (c) contain the vectors associated to the individual communities, while (d) shows the entire collection of vectors. The assortative structure of the resulting network (Figure 2) can be determined from (d) as the intra--community dot products are much larger than the inter--community dot products.}
  \end{figure}
  
  \begin{figure}
   \subfloat[Dot Products]{\includegraphics[height=2in]{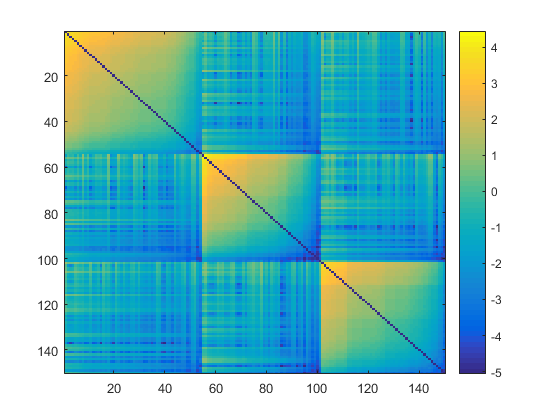}}\qquad 
    \subfloat[WRDPM Network]{\includegraphics[height=2in]{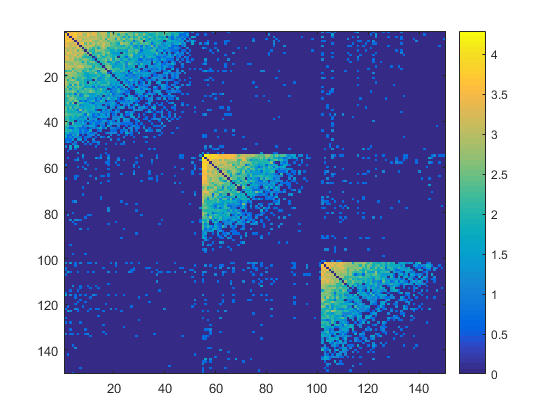}}
    \caption{Once the vectors are drawn from $V_\lambda$ (Figure 3) we compute the pairwise dot products (a) to determine the Poisson parameter for each edge. Then, the weight of each edge is drawn as a Poisson random variable using the dot product parameter forming an assortative weighted network (d). In this figure the logarithm of the dot products and edge weights are displayed to account for the scaling of the exponential weights.}
  \end{figure}
  
\section{Specializations}

The original RDPM, as well as the WRDPM as defined in the Section 2.1, describe a very broad class of models due to the arbitrary choice of vector distributions. Many previously studied generative models for (weighted) networks can be realized as special cases of the (W)RDPM by restricting the dimension of the latent spaces or the distributions of vectors. In this section, we focus on network models with prescribed edge connection parameters. The advantage to realizing these models as special cases of the WRDPM is that we obtain a latent space interpretation, and hence additional structure, in order to study them. The RDPM--motivated theoretical results about the SBM make use of this approach \cite{O,C_M1}.

\subsection{Edge Parametrized Models}

Many generative models for networks can be described by a set of $\binom{n}{2}$ Bernoulli parameters, one for each possible edge. A network is constructed from the model by drawing the edges independently based on the given parameters. The Erdos--Renyi, SBM, and Chung--Lu models all fall in to this category. For example, in the Erdos--Renyi model all $\binom{n}{2}$ parameters are the same while in the SBM the parameters depend only on the community assignment of the nodes. Although our main interest is in these commonly studied and applied models, we begin in a more general setting, analyzing arbitrary prescribed connection parameters. 

In general, our approach mirrors the examples in Section 2.2. We identify the relevant structure and select a collection of vectors whose respective dot products realize the given edge connection parameters. This approach highlights a difference that appears in the literature with respect to describing SBMs with RDPMs. Although the original generative definition of the RDPM in \cite{N3,S_Y2} follows the method outlined in Section 1.2, later versions that are focused on applications to SBM problems, such as \cite{O,C_M1}, adopted a modified version where nodes are assigned directly to pre--selected vectors, instead of drawing a vector from a distribution for each node. We will discuss another setting where the second interpretation is  natural in Section 4. 

Both versions are useful in different contexts. They can be related by selecting the distribution $W$ for the original version to be the uniform distribution over the pre--selected vectors from the second approach as in the examples of Section 2.2. This does not give rise to equal distributions over $n$ node networks but it does match the interpretations in expectation. 

Given a set of $\binom{n}{2}$ edge parameters $\{a_{j,\ell}\}_{1\leq j<\ell\leq n}$, for a generative model, it is convenient to form an $n\times n$ matrix $A$ with undetermined diagonal by setting $A_{j,\ell}=A_{\ell,j}=a_{j,\ell}$. In order to realize the  model with the WRDPM we need to find $1\times d$ vectors $\{X_j\}_{1\leq j\leq n}$ so that $\langle X_j, X_\ell\rangle=A_{j,\ell}$. Letting $X$ be the matrix whose rows are given by the $\{X_j\}$ this would imply $(XX^T)_{j, \ell}=A_{j, \ell}$ for $j\neq \ell$. Thus, in order for the $\{X_j\}$ to exist there must be a choice of diagonal entries for $A$ such that $A$ is positive definite. The following result guarantees that we can always find such a representative. 

\begin{theorem}
Let $n$ be a fixed positive integer. For each pair $(i,j)$ with $1\leq i<j\leq n$ let $a_{i,j}=a_{j,i}\in \mathbb{R}$. Then there exist $n$ real numbers $a_{\ell,\ell}$ for $1\leq \ell\leq n$ such that the matrix $A_{i,j}=a_{i,j}$ is positive definite. 
\end{theorem}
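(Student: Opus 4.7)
The plan is to exhibit an explicit choice of diagonal entries that works, using either a spectral shift or a diagonal-dominance argument; either one is a few lines once set up correctly.

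First I would form the auxiliary symmetric matrix $B$ with $B_{i,j}=a_{i,j}$ for $i\neq j$ and $B_{i,i}=0$ for all $i$. Since $B$ is real symmetric, the spectral theorem guarantees that it has real eigenvalues, so in particular a smallest eigenvalue $\lambda_{\min}(B)\in\mathbb{R}$ exists. Now pick any scalar $t>-\lambda_{\min}(B)$ (for instance $t=1+|\lambda_{\min}(B)|$) and set $a_{\ell,\ell}:=t$ for every $\ell$. Then $A=B+tI$, and the eigenvalues of $A$ are exactly $\lambda_i(B)+t$, each of which is strictly positive by the choice of $t$. Since $A$ is symmetric with all positive eigenvalues, it is positive definite, completing the proof.

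An alternative finish, which avoids invoking the spectral theorem at all, is to use strict diagonal dominance: for each $\ell$ choose
\begin{equation*}
a_{\ell,\ell} \;=\; 1+\sum_{j\neq \ell}|a_{\ell,j}|.
\end{equation*}
Then $A$ is a real symmetric matrix whose diagonal entries are positive and strictly exceed the sum of the absolute values of the off-diagonal entries in their row. By the Gershgorin circle theorem every eigenvalue of $A$ lies in a disk centered at some $a_{\ell,\ell}>0$ with radius at most $a_{\ell,\ell}-1$, hence lies in the open right half-plane; combined with symmetry, this gives positive definiteness.

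There is essentially no obstacle here: the only ingredient beyond linear algebra bookkeeping is one standard fact (either the spectral theorem or Gershgorin's theorem). I would present the diagonal-dominance version, as it is completely constructive and gives an explicit formula for the diagonal entries that makes the whole matrix $A$ positive definite, which fits naturally with the subsequent use of the theorem to realize prescribed edge-parameter models within the WRDPM framework.
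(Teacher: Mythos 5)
Your proposal is correct, and the diagonal-dominance argument you say you would present (choosing $a_{\ell,\ell}$ to strictly exceed $\sum_{j\neq\ell}|a_{\ell,j}|$ and invoking symmetry plus Gershgorin's circle theorem) is essentially identical to the paper's proof. Your alternative spectral-shift argument ($A=B+tI$ with $t>-\lambda_{\min}(B)$) is equally valid, just less explicit about the diagonal entries.
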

\begin{proof}
Let the $a_{i,j}$ be selected arbitrarily. For $1\leq \ell\leq n$ choose $a_{\ell,\ell}\in\mathbb{R}$ so that $a_{\ell,\ell}> \sum_{j\neq \ell}|a_{j,\ell}|$. Form a matrix $A$ with $A_{i,j}=a_{i,j}$. This is a real symmetric matrix and so by the spectral theorem $A$ has real eigenvalues. 

Applying Gershgorin's Circle Theorem to $A$ gives that the eigenvalues of $A$ lie in the closed disks centered at $a_{\ell,\ell}$ with radius $\sum_{j\neq \ell}|a_{j,\ell}|$. Intersecting these disks with the real line gives that the eigenvalues of $A$ must lie in $\bigcup_{\ell=1}^n \left[a_{\ell,\ell}-\sum_{j\neq \ell}|a_{j,\ell}|, a_{\ell,\ell}+\sum_{j\neq \ell}|a_{j,\ell}|\right]\subseteq \mathbb{R}^+$. Thus, all eigenvalues of $A$ are positive and $A$ is positive definite. 
\end{proof}
This theorem allows us to recover any edge parameterized  graph model as a special case of the WRDPM. 
\begin{corollary}
Any generative network model, on a fixed number of nodes $n$, where the edge weight between each pair of nodes is drawn independently from a fixed probability distribution, possibly with different parameters for each pair, can be realized under the WRDPN.
\end{corollary}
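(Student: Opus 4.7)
The plan is to reduce the corollary to a repeated application of Theorem~1, one invocation per parameter of $P$. Fix $n$ and suppose the target model specifies, for each pair $j\neq \ell$ and each parameter index $1\leq i\leq k$, a value $a^i_{j,\ell}\in S_i$ such that the edge weight $A_{j,\ell}$ is drawn from $P(a^1_{j,\ell},\dots,a^k_{j,\ell})$. First I would, for each fixed $i$, form the symmetric matrix $M^{(i)}$ with $M^{(i)}_{j,\ell}=a^i_{j,\ell}$ off the diagonal and apply Theorem~1 to pick diagonal entries $M^{(i)}_{\ell,\ell}$ making $M^{(i)}$ positive definite (using, as in the theorem, strictly diagonally dominant positive diagonal entries).

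Next, since $M^{(i)}$ is real symmetric and positive definite, I would factor $M^{(i)}=X_iX_i^T$ via a Cholesky or spectral decomposition, obtaining an $n\times d_i$ matrix $X_i$ (with $d_i\leq n$ equal to the rank of $M^{(i)}$). Reading off the rows gives vectors $X_i^1,\dots,X_i^n\in\mathbb{R}^{d_i}$ whose pairwise inner products satisfy $\langle X_i^j,X_i^\ell\rangle=a^i_{j,\ell}$ for $j\neq\ell$, exactly as required in step (WRDPM~5). Doing this independently for each parameter $i=1,\dots,k$ yields $k$ families of vectors that, once substituted into the WRDPM generative recipe, produce edge weights drawn from $P(a^1_{j,\ell},\dots,a^k_{j,\ell})$ for every pair.

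The only non-cosmetic issue is that the WRDPM as stated in Section~2.1 draws node vectors from a distribution $W_i$ rather than assigning them deterministically, whereas realizing a prescribed edge-parametrized model requires specific vectors at specific nodes. I would resolve this exactly as the authors flag in Section~3.1: invoke the equivalent formulation in which nodes are assigned directly to pre-selected vectors (the convention used in \cite{O,C_M1}). Formally, one can also take $W_i$ to be the uniform distribution on the multiset of row vectors of $X_i$ and then condition on the labelling that pairs node $j$ with $X_i^j$; this recovers the same joint distribution on $A$. The main obstacle, to the extent there is one, is simply noting that Theorem~1 must be applied separately and independently to each parameter matrix so that the $k$ parameter vectors associated with a node are chosen without any coupling constraint between different parameters $i$; since the WRDPM allows independent dimensions $d_i$ and independent distributions $W_i$, nothing prevents this, and the corollary follows.
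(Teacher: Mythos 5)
Your proposal is correct and follows essentially the same route as the paper: apply Theorem~1 separately to each parameter matrix to obtain a positive definite completion, factor it, and assign the resulting row/column vectors to the nodes, handling the distribution-versus-deterministic-assignment issue exactly as the authors do in Section~3.1. The only cosmetic difference is that the paper simply fixes $d_i=n$ (which is also what your rank condition yields, since a positive definite matrix has full rank), so there is nothing substantive to add.
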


\begin{proof}
Consider a fixed instance of any such generative model. Let $P$ be the $k$--parameter distribution from which the edge weights are drawn and for $1\leq i\leq k$ let $a_{j,\ell}^i=a^i_{\ell,j}$ be the value of the $i$th parameter between nodes $j$ and $\ell$. Applying Theorem 1 to the collection $a_{j,\ell}^i=a^i_{\ell,j}$ gives a positive definite matrix $A^i$. Thus, there exists an $n\times n$ matrix $X^i$ such that $(X^i)^TX^i=A$.

To form the WRDPM  that matches the given generative model we take $d_i=n$ for all $1\leq i\leq k$ and to each node $1\leq j\leq n$ assign the collection of vectors given by the $j$th columns of the $X^i$ for $1\leq i\leq k$. Then, this WRDPM  defines the same distribution over weighted graphs as the original generative model. 
\end{proof}

Note that although Corollary 1 determines a version of the WRDPN  that matches a given generative model, there are many ways to choose vectors satisfying the conditions. For example, for any choice of non--diagonal values, Theorem 1 determines an $n$--dimensional half--space of positive definite matrices, each leading to a different assignment of vectors to nodes. As an explicit example, note that the absolute value of the Laplacian $|L|=D+A$ satisfies this condition. Additionally, although the result shows that there is a WRDPM  with $d_i=n$ for all $1\leq i\leq k$ it is often possible to find a lower dimensional choice of $X^i$ that suffices. 

We next examine the implications of Corollary 1 for three specific, well--studied models; the Erdos--Renyi Model, the stochastic block model, and the Chung--Lu model. In particular, for each case we discuss the most efficient choice of $d$ as well as weighted generalizations of these models.

\subsection{Erdos--Renyi Networks}

In the Erdos--Renyi random graph model each edge occurs with some fixed probability $p\in(0,1)$. That is, edges are drawn from the Bernoulli distribution over $\{0,1\}$ with parameter $p$ for all pairs of nodes. In this case, although it is possible via Corollary 1 to obtain an $n$ dimensional embedding, in fact there exists an embedding for any $d\geq 1$ obtained by selecting a single vector in $\mathbb{R}^d$ with squared norm equal to $p$. Choosing $W$ in {\bf (WRDPN  3)} to be the constant distribution on that vector gives a WRDPN  equivalent to the Erdos--Renyi model.

To generalize this model to weighted networks we replace the Bernoulli distribution with another parameterized probability distribution, $P$, and enforce that the edge weight between each pair of nodes be drawn from a single parametrization of $P$. As long as the parameters are assumed positive, we may select the $d_i$ arbitrarily and for each $1\leq i\leq k$ take $W_i$ to be the constant distribution over a single vector $X_i$. Then, each edge weight is drawn from the distribution $P(\langle X_1,X_1\rangle,\ldots,\langle X_k,X_k\rangle)=P(||X_1||^2,\ldots,||X_k||^2)$. 

A standard use of the un--weighted Erdos--Renyi model is as a null model, where $p$ is chosen as the number of edges divided by $\binom{n}{2}$. For the Poisson examples from Section 2.2 we can use a similar process, using the MLE to estimate $\lambda$ as the sum of the edge weights divided by $\binom{150}{2}$. We use this process to create Poisson Erdos--Renyi networks based on the weighted networks in Figure 2(b) and Figure 4(b). Examples of these (unstructured) networks are displayed in Figure 5.  

Comparing the weighted clustering coefficients \cite{B4} of these null models to networks drawn from the WRDPM highlights the differences between the simple community and multiresolution models. For each example, we estimated $\lambda$ from the WRDPM network and computed the average weighted clustering coefficient for several draws from the Erdos--Renyi null model shown in Figure 6. The simple communities model Figure 6(a) displays higher clustering than the Erdos--Renyi version while the multiresolution model Figure 6(b) displays lower clustering. Both of these results are expected. The simple communities model is designed to have the majority of its edges in fairly homogenous, dense communities. The intra--community structure in the multi--resolution blocks has a hierarchical structure with much less clustering between the lower weight nodes.

\begin{figure}
\subfloat[Assortative Null Model cf. 2(b)]{\includegraphics[height=2in]{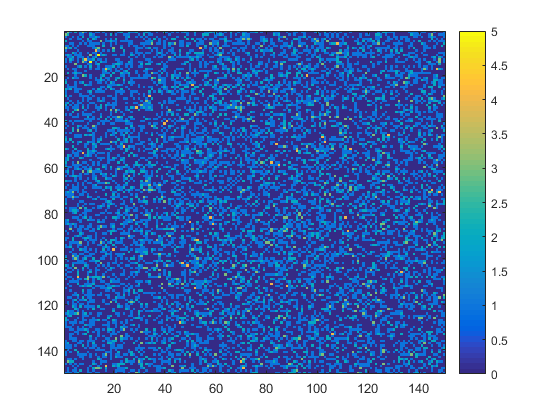}}\qquad
\subfloat[Multiresolution Null Model cf. 4(b)]{\includegraphics[height=2in]{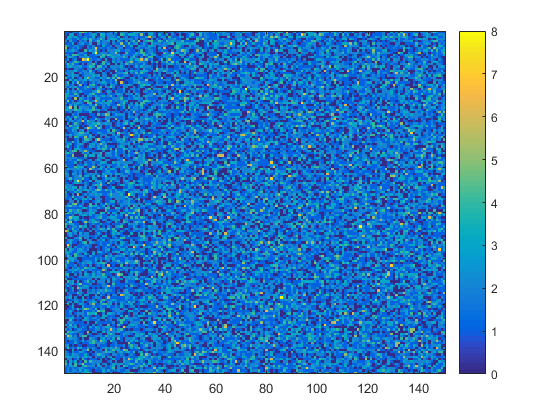}}
\caption{Erdos--Renyi null model comparison versions of the examples in Section 2.2. The estimated $\lambda$ from Figure 2(b) is $.5233$ and the estimated $\lambda$ for Figure 4(b) is $1.5532$. As expected these networks have much less structure than the examples   in the previous section. }
\end{figure}

\begin{figure}

\subfloat[Assortative Null Model cf. 2(b)]{\includegraphics[height=2.5in]{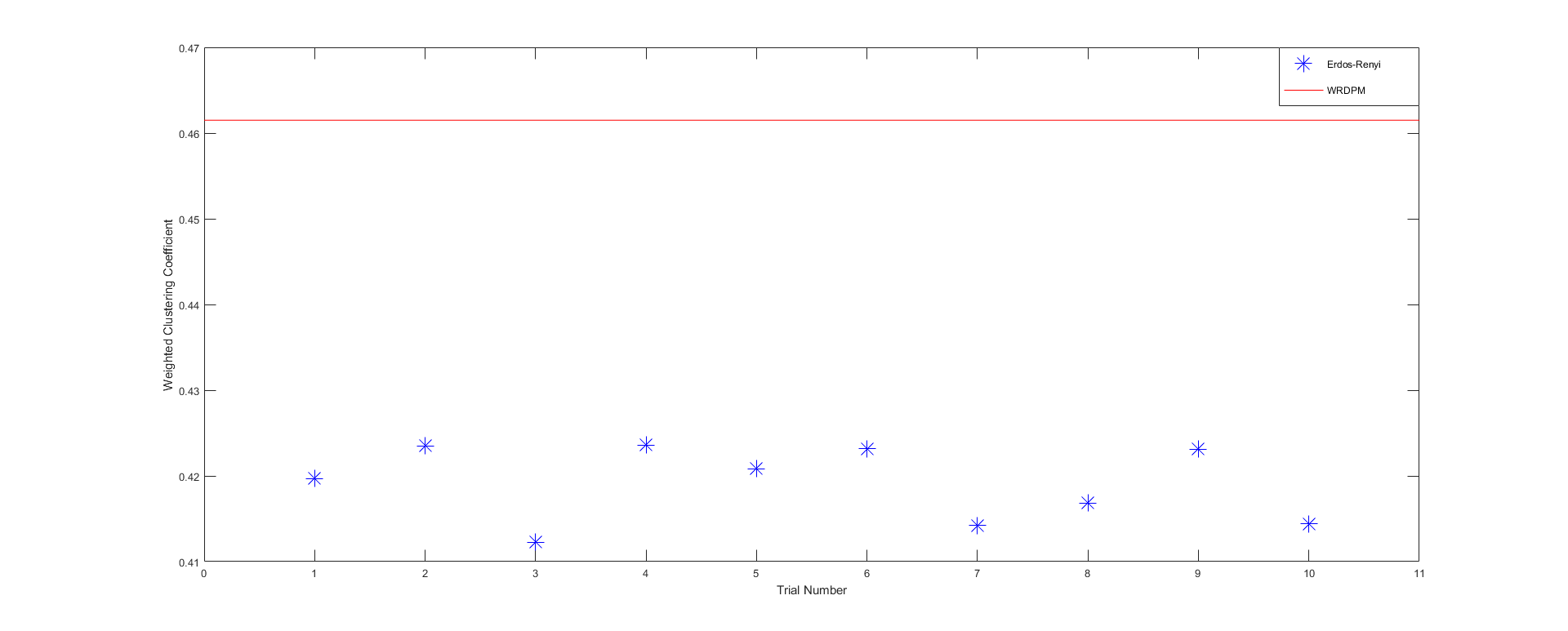}}\\
\subfloat[Multiresolution Null Model cf. 4(b)]{\includegraphics[height=2.5in]{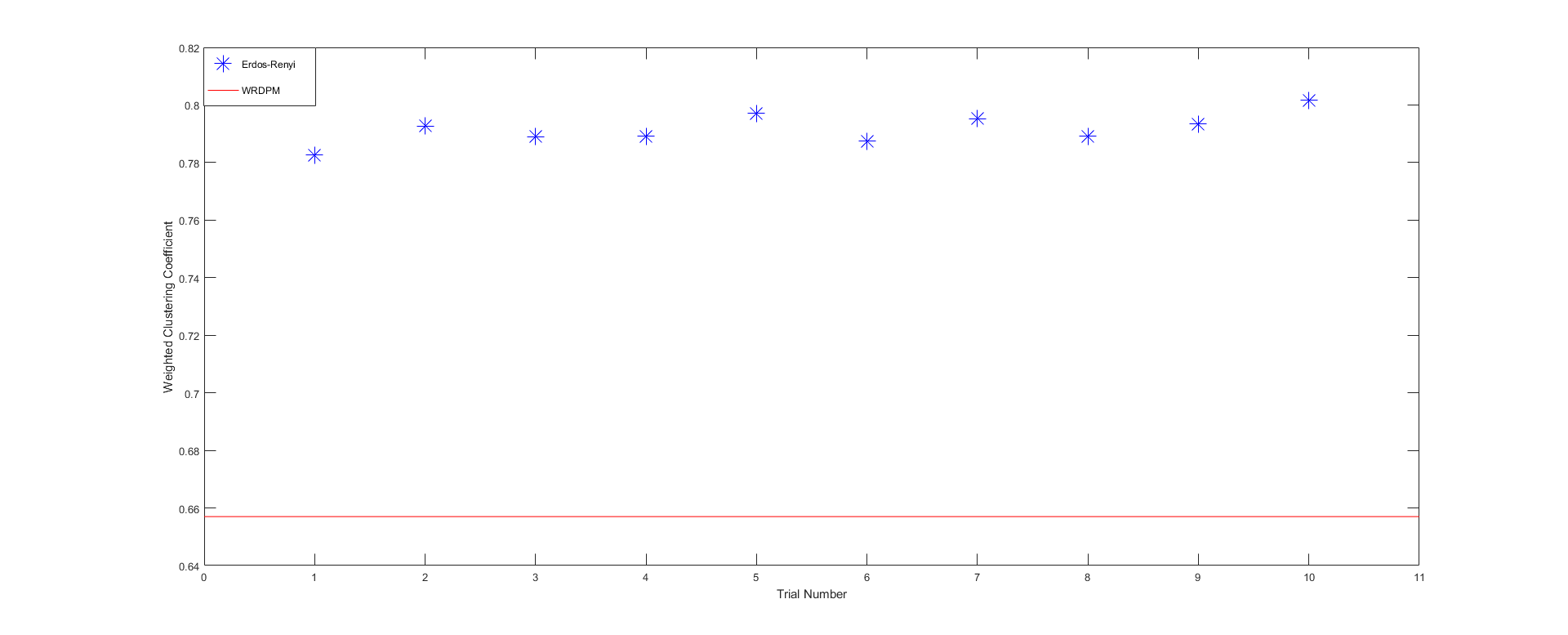}}
\caption{Comparison of average weighted clustering coefficient between the WRDPM examples of Section 2.2 and the derived Poisson Erdos--Renyi null models. For the simple community structure (a) the WRDPM network exhibits higher clustering than the null model, while the multiresolution WRDPM exhibits lower clustering. Both of these results are expected based on the corresponding un--weighted versions.  }
\end{figure}
\subsection{Stochastic Block Models}

 In order to describe a block model in the WRDPM, where the edge parameters between any pair of nodes depends only on the community assignment of those nodes, we restrict the $W_i$ to finite distributions. That is, for any choice of edge weight distribution $P$ with $k$ parameters, selecting $W_i$ to be a distribution with finite support for all $1\leq i\leq k$ gives rise to a weighted block model where the community assignment is defined by collections of nodes who are assigned to the same vector in each parameter space. When $P$ is the Poisson distribution, this recovers the model used in \cite{B0,K0} as a replacement for the standard Bernoulli model. As noted at the beginning of this section we could alternatively follow \cite{O,C_M2}, making the community assignments for the nodes in advance and then assigning nodes to vectors based on the previously selected communities. 

In the traditional stochastic block model, each node in the network is assigned to one of $b$ communities and the edges are drawn as Bernoulli random variables with probabilities determined by the community assignments. These parameters are usually summarized in a $b\times b$ matrix $B$ where the Bernoulli parameter between community $x$ and community $y$ is given by $B_{x,y}=B_{y,x}$. When the matrix $B$ is positive (semi--)definite it is possible to apply the method used to prove of Corollary 1 to $B$ to obtain a collection of $b$ vectors, one for each community, so that assigning each node to its respective community's vector realizes the block model as a WRDPM.

As an example, consider the SBM defined by community matrix:
$$B=\begin{array}{rcl}
.5&.05&.1\\.05&.4&.05\\
.1&.05&.3
\end{array}
$$

We can factor $B$ as:
\begin{align*}
B=&\begin{bmatrix}
-.3326 &-.4164 &-.8462\\
-2.004&.9079&-.3681\\
.9215&.0471&-.3854
\end{bmatrix}\begin{bmatrix}
.2530&0&0\\0&.3797&0\\0&0&.5673
\end{bmatrix}\begin{bmatrix}
-.3326 &-.4164 &-.8462\\
-2.004&.9079&-.3681\\
.9215&.0471&-.3854
\end{bmatrix}^T\\
=&\begin{bmatrix}
-.1673&-.2566&-.6373\\
-1.008&.5594&-.2772\\
.4636&.0290&-.2903
\end{bmatrix}\begin{bmatrix}
-.1673&-.2566&-.6373\\
-1.008&.5594&-.2772\\
.4636&.0290&-.2903
\end{bmatrix}^T\\
=&XX^T
\end{align*}

Hence, if we take $W$ in  {\textbf{(WRDPM 3)} to be the uniform distribution over $x_1=[-.1673, -.2566, -.6373], x_2=[-1.008, .5594, -.2772]$, and $ x_3=[.4636, .0290, -.2903]$ we recover the original SBM defined by $B$ since $\langle x_j, x_\ell\rangle=B_{j,\ell}$ by construction. Again, following \cite{O,C_M1} we can instead determine the community assignments beforehand and assign  the nodes in the first community to $x_1$, the nodes in the second community to $x_2$ and the nodes in the third community to $x_3$.

When $B$ is not positive (semi--)definite it is not possible to give a general bound, 
 independent of $n$, on the smallest possible value of $d$, i.e. the minimum number of vectors necessary to represent the SBM by the WRDPM, as is shown by the two community model with $B_{1,1}=B_{2,2}=0$ and $B_{1,2}=B_{2,1}=1$, since it requires at least n/2 dimensions to realize as a WRDPM. However, these block models can be realized under the WRDPM  using $d=n$ and can frequently be analyzed using fewer dimensions. Some results discussing when a SBM can be realized with $d=b$ in versions of the RDPM can be found in \cite{O}.

The non--positive definite example above also cannot be represented by two vectors, as the vectors for any two nodes in the same community must be orthogonal. As we will discuss in Section 4.4, the magnitude of the vector assigned to each node have an interpretation in terms of a betweeness centrality.  Thus, for modeling SBMs with the WRDPM it is natural to enforce that nodes in the same community have vectors of the same magnitude,  i.e. each node in a given community has the same a priori ability to transfer information across the network. Note that the embedding obtained by factoring the matrix with the diagonal entries $ n \sum_{z=1}^b B_{j,z} $ is a natural choice with this property.

The weighted SBM introduced in \cite{C_1,C} can be modeled by the WRDPM in a similar fashion. In this setting, the edge weights are drawn from a parametrized exponential distribution, with the parameters again only depending on the community assignments of the nodes. For each parameter $i$, we obtain a separate matrix of intra-- and inter-- community values which can then be represented by a choice of dimension $d_i$ and set of vectors $W_i$.  As discussed in Section 1, one of the main applications of the RDPM is providing a principled, theoretical framework for proving results about the SBM \cite{O, C_M2}. The WRDPM provides this same structure for the WSBM by giving a natural, geometric interpretation to the parameterization of the community connections.

\subsection{Chung--Lu Networks}

For a fixed number of nodes $n$ the Chung--Lu model is parametrized by a collection of weights, $\{w_\ell\}$, one for each node, that encode the expected degree sequence \cite{FC1}. The probability of placing an edge between node $j$ and node $\ell$ is given by $\frac{w_j\cdot w_\ell}{\sum_{a=1}^nw_a}$. This is a generalized version of the configuration model that is significantly more tractable for proving general results on expected network metrics. To realize a Chung--Lu model in the WRDPN  we can select $d$ arbitrarily and then select a vector $X_0\in\mathbb{R}^d$ with $||X_0||^2=\frac{1}{\sum_{a=1}^nw_a}$. To each node $1\leq j\leq n$, we associate the vector $w_jX_0$. Then, the dot product of the vectors associated to any two nodes $1\leq j\neq \ell\leq n$ is given by $\langle w_jX_0,w_\ell X_0\rangle=\frac{w_j\cdot w_\ell}{\sum_{a=1}^nw_a}$.

This model can be generalized by selecting an arbitrary dimension $d$, and a single vector $X_0\in\mathbb{R}^d$. Then $W$ can be chosen as a distribution over $\mathbb{R}^+X_0$. That is, $W$ is a distribution over the ray from the origin through $X_0$. Alternatively, $W$  can thought of as a distribution of weights or strengths associated to the nodes. When $P$ is the Poisson distribution, this is equivalent to the generative model introduced in \cite{R}, where each node is associated to a positive real number. 

Generalizing this approach to other distributions we can select a single vector in each parameter space and take the $W_i$ to again be distributions over the ray through each vector. As in the original Chung--Lu model, the vector assigned to each node can be thought of as representing the expected strength or weight of each node in the weighted network. For example, in the Poisson case, parametrized by the mean, the expected weight of an edge between nodes $j$ and $\ell$, associated to vectors $w_jX_0$ and $w_\ell X_0$, is exactly $w_jw_\ell||X_0||^2$. 

\section{Community Detection}
\subsection{Methods} 

In this section, we consider the inverse problem for the WRDPN  for Poisson weighted networks. That is, given a specific weighted network of interest, we attempt to find a collection of $d$--dimensional vectors $\{X_i\}_{i=1}^n$ that best represent the weighted network as a WRDPN . Letting $A$ be the adjacency matrix of our weighted network, we want to choose the $X_i$ so that $\langle X_i, X_j\rangle \approx A_{i,j}$ for $i\neq j$. Defining $X$ to be the matrix whose columns are the $X_i$, this is equivalent to approximating the non--diagonal entries of $A$ with the non--diagonal entries of $X^tX$. In other words, we are trying to solve a restricted matrix factorization and dimension reduction problem by minimizing the Frobenius norm of $X^tX-A$. 

Since the Poisson distribution is parametrized by the mean, and we assume that each $A_{i,j}$ for $i\neq j$ is a Poisson random variable, the iterative matrix factorization algorithm given in \cite{S_T} generalizes naturally to the weighted network setting, allowing us to estimate the vectors $\{X_i\}$, up to an orthogonal transformation. This algorithm is not guaranteed to converge, as there exist pathological examples with poor limiting behavior \cite{S_T}. However, when the algorithm does converge the solution is guaranteed to be a local optimum \cite{S_T}, and for data generated weighted networks the algorithm seems to converge rapidly in practice on both synthetic and empirical networks. 

Once we have such an embedding, we may use geometric and linear algebraic techniques to study the vectors as a proxy for the data--generated weighted networks as in other latent space methods.  In particular, Scheinerman and Tucker also introduced an angular $k$--means algorithm for clustering learned vectors \cite{S_T}. This method also generalizes to the weighted network setting, as the embedding itself reflects the dot product similarity. Thus, the interpretations,  described in \cite{S_T,S_Y1} of the vector directions, representing similarity in link formation patterns between the nodes, and the magnitude of the vectors, representing propensity to communicate, are still present in this model.

\subsection{Communities} The community structure of a weighted network has particularly strong connection to the geometry of  the associated embedding. As the directions of the vectors in the embedding capture a measure of similarity in link--formation patterns between nodes, nodes that belong to the same community tend to point in similar directions. Practically, this means that embeddings of weighted networks with particularly well--defined communities will tend to separate into nearly orthogonal components, with one subspace per community. The examples in Section 2.2 have this property, with one community corresponding to each of the standard basis vectors for $\mathbb{R}^3$. The examples in Section 5 also exhibit this behavior.

To further illustrate this point, consider a  network consisting of $\ell$ disjoint cliques as in Figure 7. If our embedding has $\ell$ dimensions, we can optimize our approximation of $A$ with $X^TX$ by assigning each community to a separate, orthogonal one--dimensional subspace and assigning each vector a length of $1$. In this case, increasing the number of dimensions will not yield a better embedding -- in terms of the Frobenius norm-- as the off diagonal entries of $A$ and $X^TX$ agree exactly. 

If the disjoint communities have more structure, within each subspace, we can assign magnitudes to each vector relative to the weighted degree of each node as in Figure 7 (d). When $d=\ell$ the magnitude of the individual vectors is the only free parameter to adjust to match the communities, giving rise to a graded block model. However, in this case of heterogeneous intra--community structure, given more dimensions, we can more accurately approximate the communities underlying structure by embedding each separately into its orthogonal component.

\begin{figure}[!h]
\centering
\subfloat[Disjoint Cliques]{\includegraphics[height=1.15in]{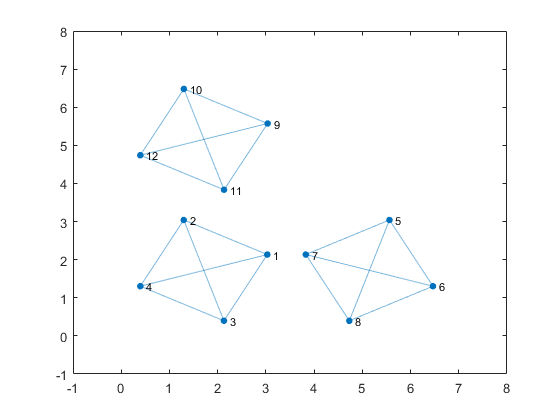}}  \ 
\subfloat[Weighted Clusters]{\includegraphics[height=1.15in]{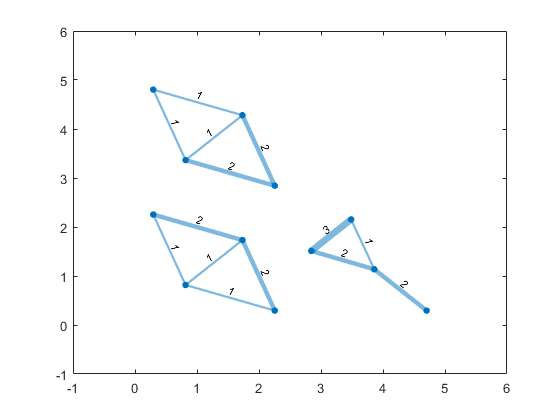}}\\
\subfloat[Embedding of (a)]{\includegraphics[height=1.15in]{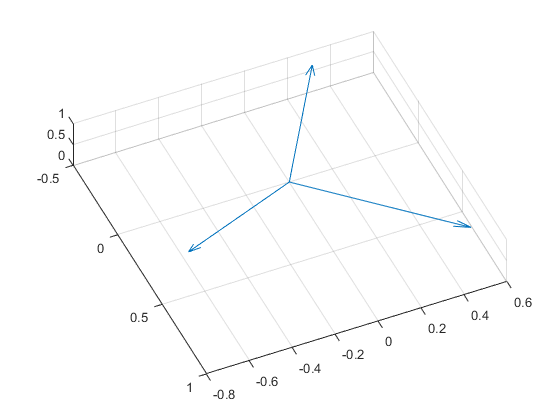}}  \ 
\subfloat[Embedding of (c)]{\includegraphics[height=1.15in]{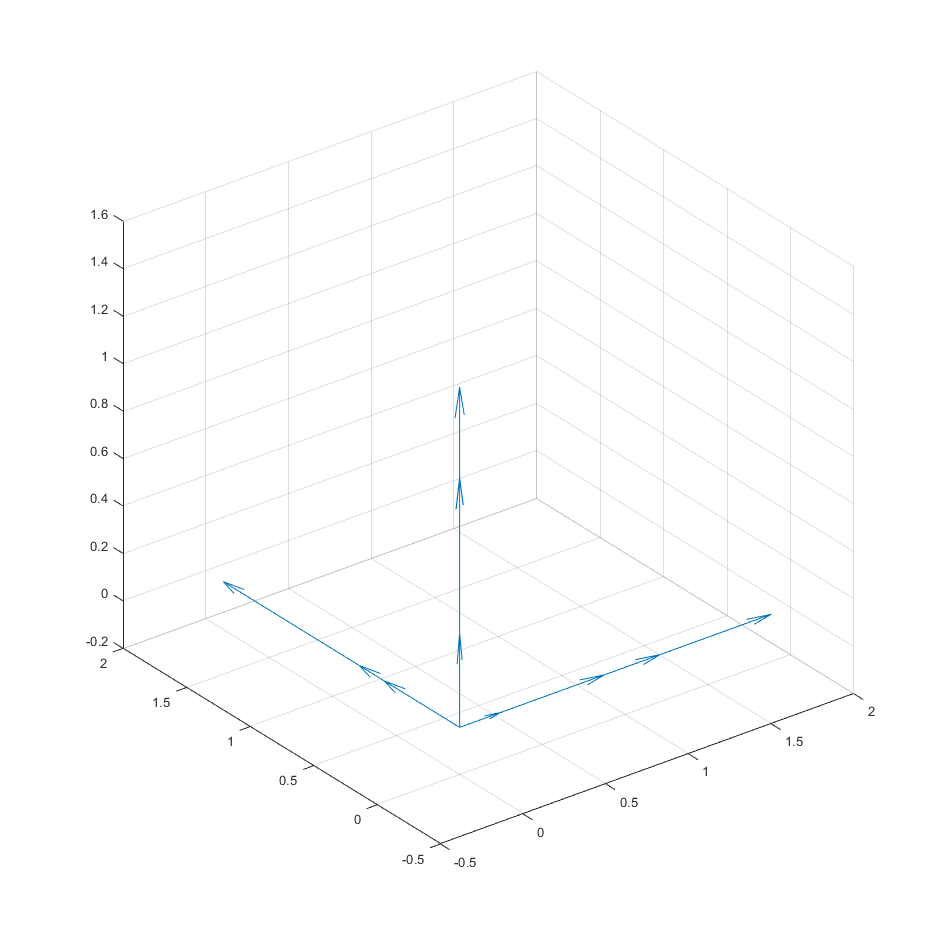}}
\caption{Vector embeddings associated to disjoint communities. For the case of disjoint cliques, there are only three vectors as each node in a clique is assigned to the same vector. For the weighted clusters each node is assigned a magnitude reflecting its connection strength within the cluster. Thus, there are  10 separate vectors in the embedding (d) of (b) as two pairs of nodes are indistinguishable.}
\end{figure}
\subsection{Centrality} While the community structure of a network can be revealed by the angles obtained from the WRDPN  embedding, the lengths of the vectors capture a measure of centrality of their associated nodes. Previous work on the RDPM has associated vector length with propensity to communicate, a measure of degree \cite{S_T,S_Y1}. This is natural, as for any fixed vector $x$, an increase in magnitude will increase its dot product with each other  vector linearly, leading to a higher probability of edge formation overall. However, in light of our discussion of community structure, we can see that vector length is also related to a betweeness measure that increases for nodes that are incident to multiple communities. 

Returning to our toy example of the previous section with $\ell$ disjoint cliques, select one node in each clique and connect it to each of the other selected nodes. In this case, we can obtain an exact embedding into $\ell+1$ dimensions by assigning each cluster to a orthogonal subspace as before and using the last dimension to assign a one to the selected nodes and a zero to all other nodes. Figure 8 shows an example of this process with $\ell=2$. Then, the length of the of the nodes connecting the communities is $\sqrt{2}>1$. Additionally, deleting edges between the selected nodes and their original communities will still lead to an embedding where the connecting nodes, those with high betweeness, are assigned longer vectors, even if they have smaller degree than the regular community members. 

This toy example is a reflection of a general property that if the embedding separates into nearly orthogonal communities then vectors associated to nodes that are adjacent to nodes in other communities must have greater length to compensate for the nearly orthogonal angles of the vectors. This property is a refinement from the usual interpretation of vector length as a proxy for number or weight of connections formed by an individual node and helps to paint a more complete picture of the information captured by the WRDPN  embedding process.

\begin{figure}[!h]
\centering
\subfloat[Toy Network]{\includegraphics[height=1.15in]{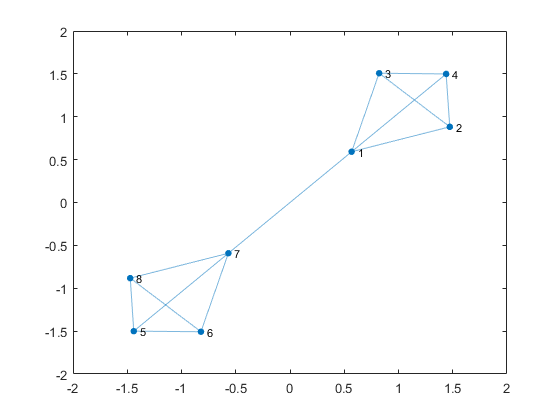}} \
\subfloat[Vector Embedding]{\includegraphics[height=1.15in]{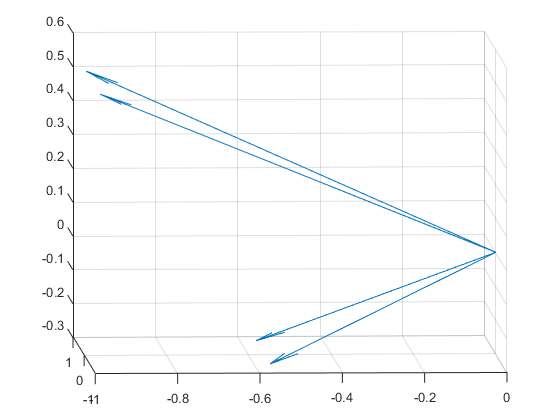}}
\caption{A toy network (a) and its associated WRDPN  embedding (b) demonstrating the effect of inter--community edges on vector magnitude. The nodes whose edges are only witin a community are assigned to the shorter vectors with negative $z$ component, while the vectors connecting the communities are associated to the longer vectors wit positive $z$ component.   }
\end{figure}

\subsection{Dimension Selection} 

In order to determine an appropriate value of $d$ for community detection we make use of the geometric characterization of Section 4.2, suggesting that the vectors associated to different communities should be nearly orthogonal and vectors belonging to the same community should be nearly aligned. Given a choice of $d$ this suggests that we should use an angular $k$--means approach to form our clusters. Here, we present a measure to select the most appropriate value of $d$. 

Consider again the trivial example of $\ell$ disjoint communities each containing $z_\ell$ nodes. In this case, if we have an effective, normalized embedding we should have $$\langle X_i, X_j\rangle=\begin{cases} 1&\textrm{i and j belong to the same community}\\ 0 &\textrm{i and j belong to different communities}\end{cases}$$

Thus, the sum of intra--community dot products should be $\sum_{i=1}^\ell \binom{z_\ell}{2}$. Similarly, the sum of the inter--community dot products should be $0$. This suggests a stress function of the form $$s(d)=\sum_{i=1}^d \binom{z_i}{2}-\operatorname{s}_\textrm{intra} (d)+\operatorname{s}_\textrm{inter} (d)$$ where $\operatorname{s}_\textrm{intra} (d)$ is the sum of the dot products of all intra--community pairs and $\operatorname{s}_\textrm{inter}(d)$ is the sum of the dot products of all inter--community pairs. The dimension $d$, and its associated partition, that minimizes this value is then an appropriate candidate for partitioning the multi--network. 

As an example of this procedure, consider a stochastic block WRDPM  with intra--community parameter $1$ and inter--community parameter $.1$ for all communities as shown in Figure 9 (a). Figure 9 also displays the embeddings for $d=2$ and $d=3$  as well as the value of the stress function for $d\in\{2,3,4,5,6,7,8\}$. As expected the value is minimized for $d=3$.

\begin{figure}[!h]
\centering
\subfloat[Weighted Network]{\includegraphics[height=1.15in]{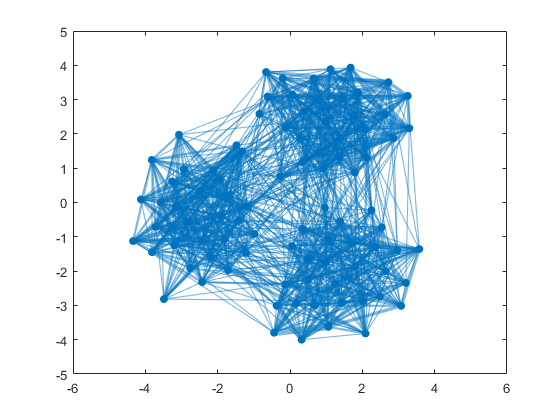}}  \ 
\subfloat[2-Embedding]{\includegraphics[height=1.15in]{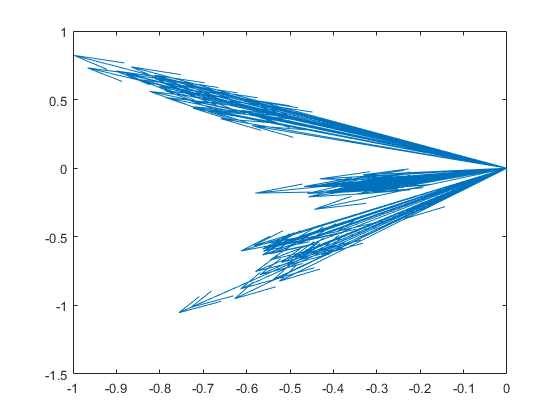}}\\
\subfloat[3-Embedding]{\includegraphics[height=1.45in]{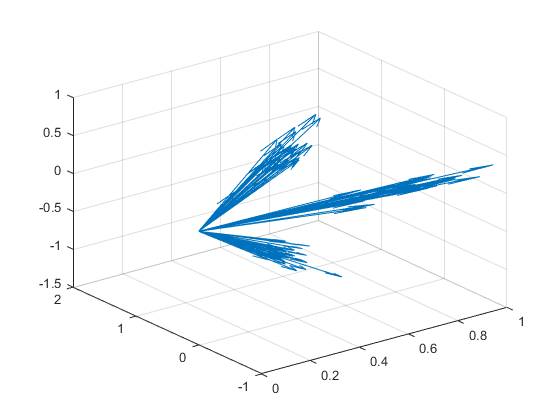}}  \ 
\subfloat[Stress Function]{\includegraphics[height=1.25in]{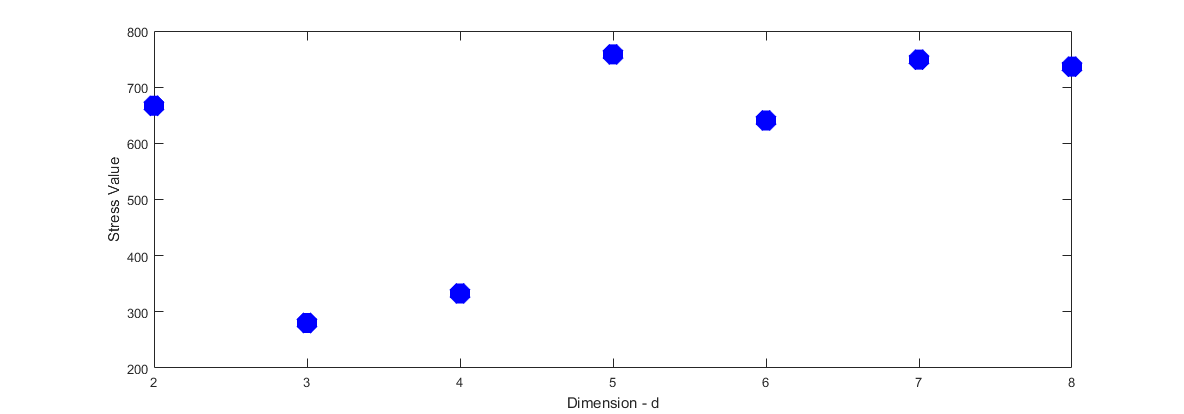}}
\caption{Comparison of WRDPN  embeddings of a weighted network (a) as the dimension of the embedding varies. As expected, the minimum value occurs at $d=3$, matching the correct structure.   }
\end{figure}

We note that it is not sufficient to consider either the inter or intra community dot products separately. For example, for the a higher dimensional embedding may split a community in such a way that the intra--community dot products are closer to $1$ for the subdivided community. This occurs with $d=4$ for the example in Figure 11 as the difference between the maximum possible sum of intra--community dot products and the realized sum is $\approx 8.22$ for $d=4$ and $\approx 9.97$ for $d=3$. However, this subdivision inflates the sum inter--community product for $d=4$ by adding the dot products, which are near $1$, from the original community that was subdivided.

Variations on this stress function may be better suited for particular applications. For example, communities of equal size can be prioritized by scaling the dot products by community size or comparing to $d\binom{\frac{n}{d}}{2}$ as a representation of the idealized situation. Alternatively, we could consider a function of the form  $$s_F(d,\lambda_1,\lambda_2)=\lambda_1(\sum_{i=1}^d \binom{z_i}{2}-\operatorname{s}_\textrm{intra} (d)+\operatorname{s}_\textrm{inter} (d))+\lambda_2||X^tX-A||_F$$ to prioritize better global embeddings by penalizing embeddings that are worse approxiomations of the original matrix.   

One situation where these variations can be helpful concerns networks with particularly well--defined communities. In this case a higher dimensional embedding $d>\ell$ may only separate a small collection of nodes or a single node from the original community structure and embed them nearly orthogonally to the remainder of the vectors. An example of this behavior is shown in Section 6.2. 

\subsection{WRDPM Null Models}

We mention that the approach of \cite{S_T} can be used to define a RDPM null model that does not appear elsewhere in the literature, by using the learned vectors as the $\{X_i\}$ in step (RDPM 4). 
This defines a probability distribution over $n$--node networks (represented by $n\times n$ symmetric binary adjacency matrices $A$) with $P(A|\{X_i\})= \prod_{\ell>j} \left[\left(\langle X_\ell, X_j\rangle\right)^{A_{\ell,j}}\left(1-\langle X_\ell, X_j\rangle\right)^{1-A_{\ell,j}}\right]$ \cite{C_M3}. In this latent space null model, the fixed $\{X_i\}$ and their respective pairwise dot products are the preserved attributes for this process.   The properties of the original network can then be compared to the expected properties of networks drawn from the distribution determined by the $\{X_i\}$. 

From the perspective of data analysis, we need not even begin with an explicit network structure. Instead, if we have pairwise similarity data for a collection of objects we can use the method of \cite{S_T} to find an embedding that models the given similarity scores by approximating the similarity values with the dot products of the $\{X_i\}$.  This allows for the construction  of an RDPM null model directly from the data  instead of from a derived network that may incorporate distortions \cite{T1,T2}. The conditions on permissible distributions $W$ for the standard RDPM are too restrictive to apply this technique in many situations, as the Bernoulli parameters for the edges require that the edge parameters must all be less than one. Depending on the selected distribution of edge weights, our generalization (cf. Section 2) encompasses a much larger class of distributions and hence allows for the construction of more accurate null models. 
\section{Applications}

We conclude by applying the method discussed in Section 4 to multi--networks derived from real--world data., a collaboration network of combinatorial geometers and voting data from the 112th U.S. Senate.  We compare the performance of the WRDPN  on multi--networks to the RDPM on binarized version of the same networks to demonstrate the benefits of the WRDPN . 
\subsection{Collaboration Networks}

Scientific collaboration networks are often studied as a proxy for the professional interaction networks of researchers (see \cite{N1} as an example). In the most common formulation of these networks, the nodes are scientists and two scientists are connected by an edge if they have written a paper together. However, these interactions also have a natural multigraph structure, where the number of edges between two scientists is computed as a (weighted) sum of the papers coauthored by them \cite{N2}.

 For these networks, we can interpret the two attributes of the vectors, direction and magnitude, in the context of our model. Two researchers are more likely to have a higher number of coauthored papers together if they share similar interests or connectivity patters, i.e., their vectors point in similar directions, or if one or both of them is particularly prolific, represented by large magnitude. Additionally, we expect researchers who collaborate across sub--topics to be assigned larger magnitude vectors.  We consider the large connected component of a collaboration network from the field of computational geometry \cite{B}, with 7,343 authors and 11,898 publications, where the edges are weighted by the number of co--publications. To compare to the RDPM model we also consider the unweighted underlying collaboration network. 

Using the methods of Section 4 we construct low--dimensional representations of the multi--network adjacency matrix for the giant component of the combinatorial geometry researcher data. Comparing the embeddings of the multigraph   to the embedding of the underlying unweighted simple graph Figure 12, which is much more uniformly distributed, shows that the clustering into ``nearly orthogonal'' components, centered on particularly prolific scientists/subfields, is much stronger in the multigraph setting than for the simple network. 
 
\begin{figure}[!h]
\centering
\subfloat[Weighted 2--Embedding]{\includegraphics[height=1.15in]{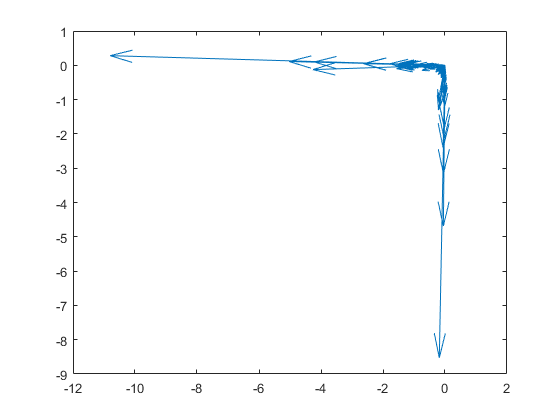}}  \ 
\subfloat[Unweighted 2--Embedding]{\includegraphics[height=1.15in]{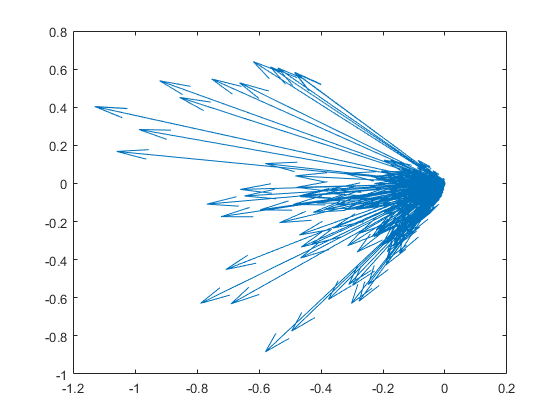}}\\
\subfloat[Weighted 3--Embedding]{\includegraphics[height=1.15in]{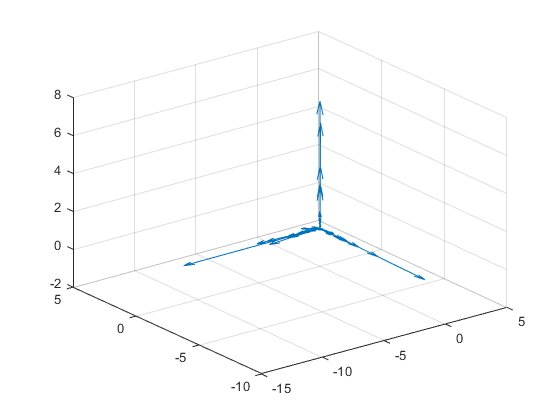}}  \ 
\subfloat[Unweighted 3--Embedding]{\includegraphics[height=1.15in]{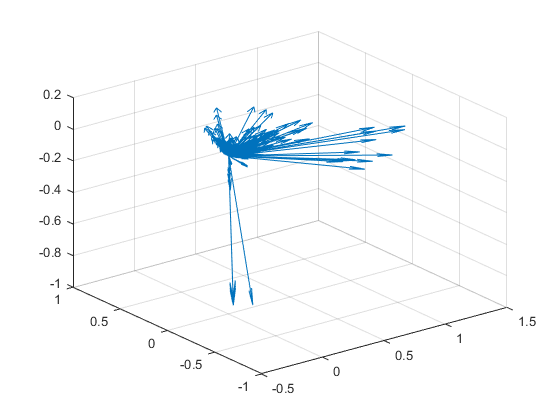}}
\caption{Comparison of WRDPM  embeddings to binarized embeddings of a weighted coauthorship network. The weighted embeddings partition the data much more cleanly. }
\end{figure}

We computed embeddings and their associated stress values for both networks with $d\in\{2,3,4,5,6,7,8,9,10\}$. The results are shown in Figure 13. Notice that the stress values associated with the unweighted embeddings are significantly higher than those of the weighted networks confirming the qualitative analysis in Figure 12. 

\begin{figure}[!h]

\centering
\includegraphics[height=1.15in]{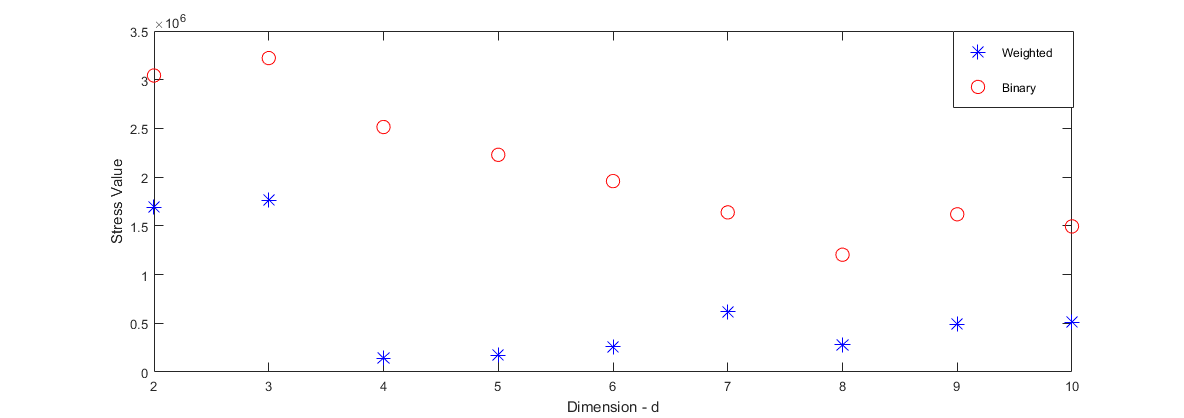}

\caption{Comparison of stress values for the computational geometry coauthorship network between the weighted and unweighted realizations. The weighted embedding significantly outperforms the binarized model. }
\end{figure}

\section{Voting Data}
We next consider roll call voting data from the 112th Senate \cite{SCP}. We construct a weighted adjacency matrix with $A_{i,j}$ representing the number of times that Senator $i$ cast the same vote as Senator $j$ on a piece of legislation. As with the coauthorship data, we used the methods of Section 4 to construct low dimensional approximations of the voting data. In this case, the binarized network is a complete graph which does not contain any useful clustering information. Standard approaches for deriving simple networks from weighted data, such as thresholding or bounded outdegrees, can encode improper structural distortions \cite{T1,T2}. Thus, the weighted model outperforms the binary version in this case as well. 

\begin{figure}[!h]
\centering
\subfloat[2--Embedding]{\includegraphics[height=1.15in]{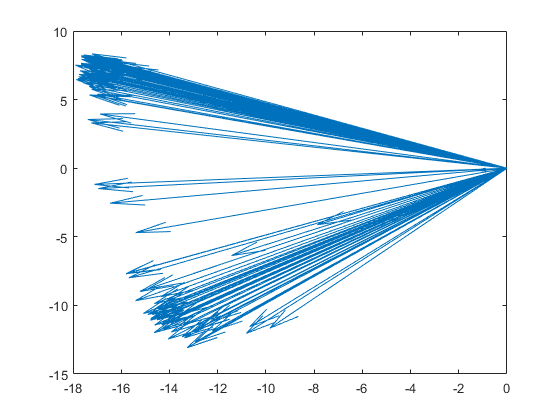}} \
\subfloat[3--Embedding]{\includegraphics[height=1.15in]{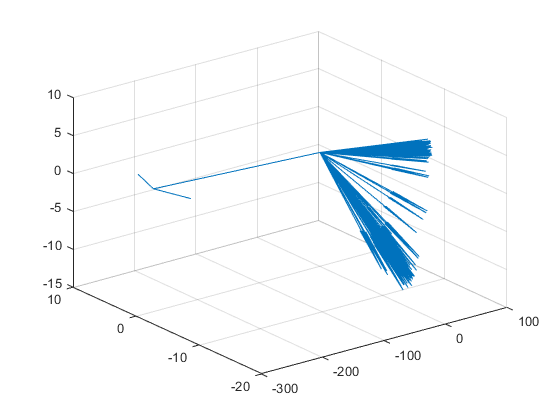}}
\caption{Embeddings of the 112 Senate roll call voting data. The two--dimensional embedding splits the Senate by party with the Republican Senators above and the Democratic Senators below. The three--dimensional embedding separates a single senator with an unusual voting pattern from the remaining 101 senators, suggesting that the partition achieved from the 2--embedding is robust.  }
\end{figure}

The two and three dimensional embeddings for the weighted networks are shown in Figure 14. The two dimensional embedding captures the party structure in the Senate, with the Republican Senators above the x-axis and the Democratic Senators below. The senators close to the center have reputations for moderation or cross party behavior such as Senator Scott Brown from Massachusetts. 

When the dimensionality of the embedding is increased, a single senator is separated from the two party structure. That individual is Senator John Ensign who resigned after only four months of the term. Thus, his voting record is quite distinct from the rest of the Senators. This pattern continues in higher dimensions with individual senators being separated from the party structure, suggesting that the division found in the two--dimensional case is close to optimal.

\section{Conclusion} 
We have introduced a generalization of the RDPM for weighted networks. As a generative model, the WRDPM contains several other commonly studied models as special cases and provides a geometric interpretation for these models.  Using the dot product to parametrize the network distinguishes the WRDPM (and the RDPM before it) from other latent space models where distance is the standard measure. 

The dot product embedding provides interpretatbility to the magnitude and direction of the vector associated to each node and allows for inference based on a factorization of a weighted adjacency matrix. This process leads to a dimension reduction procedure for weighted networks. Using the connection between the embeddings and community structure we constructed a stress function for dimension selection.

\end{document}